\pdfoutput=1

\documentclass[journal]{IEEEtran}
\ifCLASSINFOpdf
  \usepackage[pdftex]{graphicx}
  \graphicspath{{./}{./figures/}}
  \DeclareGraphicsExtensions{.pdf,.jpeg,.png}
\else
\fi
\hyphenation{op-tical net-works semi-conduc-tor}

\newcommand{\norm}[1]{\Vert#1\Vert}

\usepackage{amsfonts}
\usepackage{amsthm}
\usepackage{amsmath}
\usepackage{enumitem}

\usepackage{cite}

\newtheorem{thm}{Theorem}



\newcommand{\hl}[1]{#1}

\begin{document}
%
\title{Self-supervised Dynamic CT Perfusion Image Denoising with Deep Neural Networks}
%
%
%

\author{Dufan Wu,
        Hui Ren,
        and Quanzheng Li
\thanks{This work was supported in part by NIH under grant 1RF1AG052653 and 5P41EB022544.

D. Wu, H. Ren, and Q. Li are with the Center for Advanced Medical Computing and Analysis and Gordon Center for Medical Imaging, Massachusetts General Hospital and Harvard Medical School, Boston, MA, 02114, USA. 

Emails: dwu6@mgh.harvard.edu; hren2@mgh.harvard.edu; li.quanzheng@mgh.harvard.edu}
}

%
%

\markboth{~Vol.~xx, No.~xx, xxxx}%
{Noise2Noise CTP Denoising}
%



\maketitle

\begin{abstract}
Dynamic computed tomography perfusion (CTP) imaging is a promising approach for acute ischemic stroke diagnosis and evaluation. Hemodynamic parametric maps of cerebral parenchyma are calculated from repeated CT scans of the first pass of iodinated contrast through the brain. It is necessary to reduce the dose of CTP for routine applications due to the high radiation exposure from the repeated scans, where image denoising is necessary to achieve \hl{a} reliable diagnosis. \hl{In this paper,} we proposed a self-supervised deep learning method for CTP denoising, which did not require any high-dose reference images for training. The network was trained by mapping each frame of CTP to an estimation from its adjacent frames. \hl{Because the noise} in the source and target was independent, this approach could effectively remove the noise. Being free from high-dose training images granted the proposed method easier adaptation to different scanning protocols. The method was validated on both simulation and a public real dataset. The proposed method achieved improved image quality compared to conventional denoising \hl{methods}. On the real data, the proposed method also had improved spatial resolution and contrast-to-noise ratio compared to supervised learning which was trained on the simulation data. \bstctlcite{IEEEexample:BSTcontrol}
\end{abstract}

\begin{IEEEkeywords}
Computed tomography, dynamic CT perfusion, image denoising, deep learning, self-supervised learning
\end{IEEEkeywords}

%
\IEEEpeerreviewmaketitle

\section{Introduction}
%
%
%
%
\IEEEPARstart{S}{troke} is the 5th cause of death and a leading cause of long term disability in the United States \cite{Benjamin2017}. Stroke is caused by the \hl{interruption} of blood supply to part of the cerebral tissue, which leads to \hl{a lack} of oxygen in the tissue and permanent brain damage. Ischemic stroke which is caused by \hl{an obstructed} blood supply, accounts for approximately 87\% of all the strokes \cite{Benjamin2017}. Mechanical thrombectomy has been proved to be an effective treatment for certain patients suffering from \hl{an} ischemic stroke within 6 to 24 hours from symptom onset \cite{Nogueira2018}. Findings from imaging, such as the size of infarct cores, are important criteria to determine the patients' eligibility. Hence, imaging plays an important role in ischemic stroke management and dynamic computed tomography perfusion (CTP) is among the I-A recommendations (strong recommendation with high-quality evidence) for eligible patients in the 2018 American Heart Association (AHA) guideline \cite{Powers2018}. Compared to other I-A imaging approaches such as \hl{diffusion-weighted} imaging (DWI), CTP has its major advantage on speed, availability, and cost-effectiveness compared to magnetic resonance imaging (MRI) \cite{Muir2006}. Because it is crucial to treat patients with ischemic stroke within 24 hours, the high availability of dedicated CT scanners in emergency departments is its main advantage over MR.

Dynamic cerebral CT perfusion repeatedly scans the brain during the first pass of \hl{the} iodinated contrast agent through cerebral parenchyma. The scan usually lasts for $60-75$ seconds with $\leq 3$ seconds interval between adjacent frames \cite{Konstas2009}. Hemodynamic parametric maps are computed from the time frames, including cerebral blood flow (CBF), cerebral blood volume (CBV), mean transit time (MTT) and time to peak (TTP). CBF and CBV are usually used to determine the infarct core where the occlusion of blood flow is most severe. MTT is highly correlated to the penumbra areas where the tissues have reduced blood flow and may be damaged as time elapses \cite{Konstas2009a}. The high radiation dose due to the continuous exposure to X-ray is one of the biggest concerns of CTP in clinical applications, and \hl{a low-dose} protocol is necessary to reduce \hl{the} potential risk of radiation \cite{Murphy2014}. Increased noise due to low-dose scans leads to very noisy parametric maps. There is even \hl{a} considerable amount of noise in parametric maps from normal-dose CTP scans \cite{Gonzalez2012}. Hence, image denoising plays an important role to achieve valid images for diagnosis. 

Conventional denoising algorithms \hl{for low-dose CT} are designed to reduce image noise with \hl{statistical modeling and} edge preservation priors\hl{\cite{Zhang2017, Zhao2019, Hasan2018, Geraldo2016}}. \hl{In CTP imaging, the structural correlation between time frames offers more information which can be utilized for denoising.} Time-intensity profile similarity (TIPS) and its variants use spatially variant filters according to similarities in both spatial and time domain \cite{Mendrik2011, Pisana2016, Pisana2017}. Gaussian process modeling has been proposed to utilize the smoothing prior of \hl{the time-concentration curve of the contrast agent} \cite{Zhu2012}. \hl{Besides} time frame denoising, penalty functions, such as tensor total variation (TTV) and sparse coding dictionary, have also been proposed to use with iterative deconvolution  \cite{Fang2013, Fang2015, Niu2016, Zeng2016}. 

\hl{The priors for image denoising can also be incorporated into the image reconstruction for improved image quality when raw data is accessible. Standard iterative reconstruction (IR) methods can be directly applied to each time-frame independently\cite{Negi2012, Lin2013, Niesten2014, Tao2014}. Special IR algorithms have also been developed for CTP by exploiting the inter-frame correlation. Low-noise prior images can be constructed from previous (non-contrast) scans or averaging the time frames, which will greatly improve the quality of time-frame image reconstruction\cite{Chen2008, Nett2010, Ma2012}. One can also reconstruct the difference between each time frame to "reduce" the number of unknowns\cite{Pourmorteza2016, Seyyedi2018}. The low-rank penalty can also be applied to the time-frame reconstruction\cite{Li2019}. Besides time-frame reconstruction, it is also possible to model the time-concentration curves with basis functions and directly reconstruct the parametric maps\cite{Manhart2013}. However, IR algorithms require access to raw data and are out of the scope of this work, where we aim at CTP denoising from images only. }

In recent years, deep learning-based medical image denoising has achieved great success, where it \hl{copes with} the complex structure and noise properties \hl{by} deep neural networks learned from data. \hl{The networks are trained to map the low-dose CT images to normal-dose CT images, usually on a training set with paired or unpaired low- and normal-dose images\cite{Wolterink2017, Chen2017, Kang2017, Chen2017a, Yang2018}.}
\hl{Besides} general low-dose CT denoising, there are a few applications of deep learning to CTP. Xiao et al. proposed a spatial-temporal neural network to map low-dose time frames to high-dose ones \cite{Xiao2019}. Kadimesetty et al. proposed to use deep learning to denoise parametric maps as well as time frames \cite{Kadimesetty2018}. Both studies demonstrated improved image quality compared to conventional algorithms. 

\hl{Despite the} promising performance of current deep-learning methods for denoising, they need high-dose reference images for training, whose acquisition is a non-trivial task. The various protocols of CTP also made it harder to acquire reference images to cover all the manufacturers and protocols \cite{Konstas2009}. Furthermore, it has been shown that there is \hl{still a considerable amount} of noise in the parametric maps of high-dose CTP \cite{Gonzalez2012}. Hence, it is desirable to train the denoising network without high-dose reference images. A recent work, Noise2Noise, demonstrated that denoising networks could be effectively trained by mapping between two independent noise realizations instead of mapping to clean images \cite{Lehtinen2018}. Wu et al. also applied the Noise2Noise to medical imaging where the two noise realizations were achieved by projection data splitting \cite{Wu2019}. However, in many situations, we do not have access to either two noise realizations of the same patients or the projection data. As a consequence, none of the aforementioned methods could be directly applied. \hl{Beside Noise2Noise, two other unsupervised learning frameworks for image denoising are also noticeable, including Deep Image Prior\cite{Ulyanov2018, Gong2019} and CycleGAN\cite{Kang2019}, which will be further discussed in section \ref{sec:existing}.}

To generate the two noise realizations, we exploited \hl{the fact} that adjacent time frames in CTP are noise independent but highly correlated in structure. An estimation of the current frame was approximated by averaging its adjacent two frames with linear correction, which was considered as another noise realization of the current frame. The denoising network was trained by mapping the frame images to their corresponding estimations. Hence, the proposed approach was self-supervised and did not require any additional data. An additional bias compensation term was added to the loss function, which took the loss between the \hl{low-frequency} components of denoised images and source images. The additional term also acted as an effective regularization term to prevent overfitting. The proposed Noise2Noise approach was validated on both simulation data and real CTP data from the 2018 Ischemic Stroke LEsion Segmentation (ISLES) challenge \cite{Aichert2013, Kistler2013, Maier2017}. It demonstrated improved performance compared to conventional methods including Gaussian filtering, TIPS filtering \cite{Mendrik2011} and TTV regularized deconvolution \cite{Fang2015} on both datasets. Compared to supervised learning, Noise2Noise achieved similar image quality on the simulation dataset, but improved spatial resolution and contrast-to-noise ratio (CNR) on the real dataset where \hl{the supervised network} was trained on the simulation data as in \cite{Kadimesetty2018}. 

\hl{
\section{Related Works}\label{sec:existing}

\subsection{Noise2Noise}
The main framework of Noise2Noise\cite{Lehtinen2018} is that when training denoising networks, instead of using clean images as the training target, one can also use a zero-mean and independent noise realization of the same object. In this work, we did not aim to challenge or modify this framework. Instead, we focused on how to obtain two such noise realizations from CTP time-frame images. In the original Noise2Noise work\cite{Lehtinen2018}, it was assumed that two such noise realizations are given. In \cite{Wu2019}, the authors further extended the application to single-sampled raw data, where the two independent realizations were constructed via raw data splitting. However, neither of the sampling strategies could be directly applied to the CTP images, which are single-sampled without raw data. 

\subsection{Deep Image Prior}
Deep Image Prior\cite{Ulyanov2018, Gong2019} achieved unsupervised single-image denoising by fitting random noise or prior images to noisy images through deep neural networks. The network will converge to the structures before the noise, and early stopping is used to remove the noise. Compared to Noise2Noise, the advantage of Deep Image Prior is the higher flexibility where only one noise realization is needed. However, its drawback is the requirement for network training during inference and sensitivity to parameter selection. Each CTP scan has many time-frame images, and Deep Image Prior may need to train a different network for each time frame in each CTP scan with different hyperparameters. Hence, Deep Image Prior is not an ideal choice for the CTP denoising. 

\subsection{CycleGAN}
A recent work\cite{Kang2019} used the cyclic adversarial loss to denoise retrospectively gated cardiac CT angiography (CTA) images. The CycleGAN was built to match the distribution of the noisy images at systolic phases with the clean images at diastolic phases. However, cerebral CTP images have significant differences with cardiac CTA, which are likely to jeopardize the efficacy of CycleGAN.

First, in cardiac CTA, the images at diastolic phases have lower noise than the ones in systolic phases because of the longer sampling time. However, such heavily uneven sampling is not available in cerebral CTP. 

Second, in cardiac CTA, the images between different time frames have similar structures and contrast levels, leaving the noise levels the biggest difference. On the other hand, in cerebral CTP, because of the dramatic change in iodine concentration, there are significant differences in the vessel visibility and image brightness among the early frames, frames near the peak of concentration and late frames. So even if we could construct a low-noise image by averaging time frames, it will have significant, intrinsic differences to the noisy time-frame images. These differences violate the requirement for CycleGAN-based denoising, that source and target images should have similar structures other than noise.

\section{Preliminaries}

\subsection{CTP Imaging}\label{sec:ctp_method}
Denote the time-frame images as $\mathbf{x}(1)$, $\mathbf{x}(2)$, $\dots$, $\mathbf{x}(t)$, $\dots$, $\mathbf{x}(T)$, the time-concentration images of iodinated contrast $\mathbf{c}(t)$ can be calculated as:
\begin{equation}\label{eq:tcc}
\mathbf{c}(t) = \mathbf{x}(t)- \frac{1}{T_0}\sum_{t=1}^{T_0}{\mathbf{x}(t)},
\end{equation}
where the second term is the estimation of non-contrast CT images with the average of early frames\cite{Fieselmann2011}. We used $T_0=2$ in our study. 

After $\mathbf{c}(t)$ is calculated for each voxel, parametric maps including CBF, CBV, MTT and TTP can be calculated via the deconvolution methods\cite{Fieselmann2011}. More details can be found in appendix \ref{app:ctp_method}.

\subsection{Noise2Noise Training}\label{sec:noise2noise}
Denote the denoising network as \hl{$f(\mathbf{x;\mathbf{\Theta}})$} which has input $\mathbf{x}$ and parameters to be learned $\mathbf{\Theta}$, Noise2Noise trains the denoising network with:
\begin{equation}\label{eq:n2n}
\mathbf{\Theta}^* = \arg\min_\mathbf{\Theta} \frac{1}{N}\sum_{i=1}^{N}\norm{f(\mathbf{x}_i+\mathbf{n}_{1i};\mathbf{\Theta}) - (\mathbf{x}_i+\mathbf{n}_{2i})}_2^2,
\end{equation}
which maps the $i$th noisy training image, $\mathbf{x}_i+\mathbf{n}_{1i}$ to another noise realization of it, $\mathbf{x}_i+\mathbf{n}_{2i}$. 

When $\mathbf{n}_{2i}$ is zero-mean and independent from $\mathbf{n}_{1i}$, the Noise2Noise denoising (\ref{eq:n2n}) is equivalent to training with clean images. Intuitively, it can be explained that because we cannot predict $\mathbf{n}_{2i}$ due to  its independence, the best way to minimize the L2-loss is predicting the mean of $\mathbf{n}_{2i}$, which is zero. A brief proof can be found in the appendix \ref{app:noise2noise}
}

\section{Methodology}
\subsection{Noise2Noise for CTP Denoising}
In CTP imaging, adjacent time frames are acquired in short time interval and similar \hl{to each other}, so we have:
\begin{equation}\label{eq:avg_relation}
\mathbf{c}(t) \approx \frac{\mathbf{c}(t-1) + \mathbf{c}(t+1)}{2}
\end{equation}

When building Noise2Noise loss from (\ref{eq:avg_relation}), it should be noted that although \hl{time-frame} images $\mathbf{x}(t-1)$, $\mathbf{x}(t)$ and $\mathbf{x}(t+1)$ \hl{have} independent and zero-mean noise, \hl{the averages of the early frame images in (\ref{eq:tcc}), $\sum_{t=1}^{T_0}\mathbf{x}(t)$, are the same and will introduce correlated noise.} 

To remove \hl{this noise dependence}, different early frames could be used on the \hl{l.h.s. and r.h.s.} of (\ref{eq:avg_relation}). Under the circumstance where \hl{$T_0=2$}, (\ref{eq:avg_relation}) became:
\begin{equation}\label{eq:avg_relation_2}
\mathbf{x}(t) - \mathbf{x}(1) \approx \frac{\mathbf{x}(t-1) + \mathbf{x}(t+1)}{2} - \mathbf{x}(2)
\end{equation}

Our Noise2Noise loss was built to map l.h.s. of (\ref{eq:avg_relation_2}) to its r.h.s. To compensate for the estimation's bias especially near the peak of \hl{the time-concentration curve}, we applied linear correction to \hl{the estimation}, which was:
\begin{equation}\label{eq:bc_train}
\mathbf{x}_{e}(t) = \kappa(t)\frac{\mathbf{x}(t-1) + \mathbf{x}(t+1)}{2}, 
\end{equation}
where 
\begin{equation}\label{eq:bc_train_2}
\kappa(t) = \arg\min_\kappa\norm{\kappa\frac{\mathbf{x}(t-1) + \mathbf{x}(t+1)}{2} - \mathbf{x}(t)}_2^2
\end{equation}

\hl{It is possible to use spatially variant $\kappa(t)$ for reduced bias, but it would introduce more hyperparameters such as the smoothness of $\kappa(t)$. Instead, we will introduce a more straight forward bias compensation term in section \ref{sec:bias_compensation} which is easier to tune. }

The Noise2Noise training loss was:
\begin{equation}\label{eq:n2n_loss}
\begin{split}
L_{n2n}&(\mathbf{\Theta}) = \frac{1}{2N}\sum_{i=1}^N\frac{1}{T_i-2}\sum_{t=2}^{T_i-1}\sum_{t_0=1}^2 \\
					&\norm{\hl{f(\mathbf{x}_i(t), \mathbf{x}_i(t_0);\mathbf{\Theta})} - \left( \mathbf{x}_{ie}(t) - \mathbf{x}_i(3-t_0) \right)}_2^2,
\end{split}
\end{equation}
where $T_i$ is the number of frames of sample $i$, $\mathbf{x}_{ie}(t)$ is the estimation of $\mathbf{x}_i(t)$ according to (\ref{eq:bc_train}) and (\ref{eq:bc_train_2}). \hl{$\mathbf{x}_i(t_0)$ and $\mathbf{x}_i(3-t_0)$ are the two different early frames,} which are $\mathbf{x}_i(1)$ and $\mathbf{x}_i(2)$ when $t_0=1$ and  $\mathbf{x}_i(2)$ and $\mathbf{x}_i(1)$ when $t_0=2$.

The network took one \hl{time-}frame image $\mathbf{x}(t)$ and one early frame image $\mathbf{x}(t_0)$ as input and directly output the denoised \hl{time-concentration image} $\mathbf{c}_d(t)$. During testing, $\mathbf{c}_d(t)$ was calculated by averaging the outputs with all possible $\mathbf{x}(t_0)$. In case of two early frames:
\begin{equation}
\mathbf{c}_d(t) = \frac{1}{2}\sum_{t_0=1}^2\hl{f(\mathbf{x}(t),\mathbf{x}(t_0);\mathbf{\Theta})}
\end{equation}

It should be mentioned that another possibility of building the Noise2Noise loss is doing it reversely, by mapping $\mathbf{c}(t-1)$ and $\mathbf{c}(t+1)$ to $\mathbf{c}(t)$. We chose the current approach mainly because of the following reasons:

First, \hl{the} network was easier to be applied to the time \hl{frames} since denoising only relied on the current frame. The network could be applied to the start and end of the time \hl{frames} without any data padding. 

Second, $\mathbf{x}(t)$ and $\mathbf{x}(t_0)$ contained all the structural information of $\mathbf{c}(t)$. \hl{Although $\mathbf{x}_e(t)$} might have relatively large bias near the peak of \hl{the} time-concentration curve, the bias was small at most time points where the time-concentration curve was monotonic. 

A diagram of the proposed Noise2Noise framework for CTP denoising is given in figure \ref{fig:scheme}. 

\begin{figure}[t]
\centering
\includegraphics[width=\columnwidth]{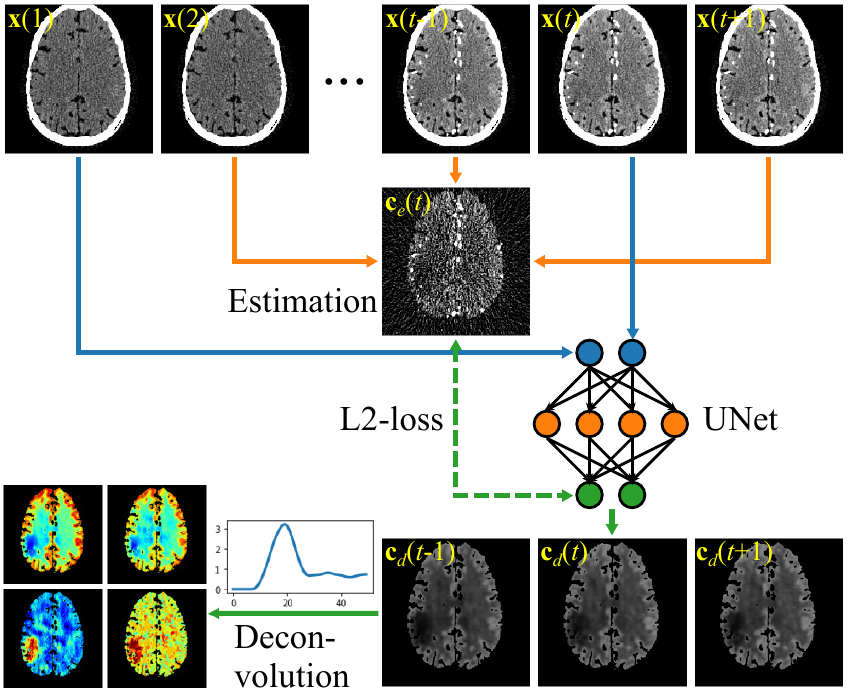}
\caption{Proposed Noise2Noise framework for CTP denoising. The blue lines demonstrate the inputs to the network. The orange lines demonstrate the estimation of $\mathbf{c}(t)$. Averaging over early frames are not demonstrated for the simplicity of the figure. }
\label{fig:scheme}
\end{figure}

\subsection{Bias Compensation}\label{sec:bias_compensation}
Bias could be introduced by \hl{several reasons:} bias of $\mathbf{x}_e(t)$; lack of training samples near the time-concentration peak compared to other time points; lack of total training samples which could lead to overfitting to noise. 

Since noise is mostly in the high frequency, an unbiased denoising algorithm should keep the low frequency of the noisy images unchanged. The bias compensation term was designed based on this assumption, which constrained the L2-distance between \hl{the} low-pass filtered time-concentration images and \hl{the} low-pass filtered output of network:
\begin{equation}\label{eq:bias_loss}
\hl{
\begin{aligned}
L&_{bias}(\mathbf{\Theta}) = \frac{1}{2N}\sum_{i=1}^N\frac{1}{T_i-2}\sum_{t=2}^{T_i-1}\sum_{t_0=1}^2 \\
					&\norm{\mathbf{G}*f(\mathbf{x}_i(t), \mathbf{x}_i(t_0);\mathbf{\Theta}) - \mathbf{G}*(\mathbf{x}_i(t) - \mathbf{x}_i(t_0))}_2^2,
\end{aligned}
}
\end{equation}
where $\mathbf{G}$ is a low-pass Gaussian filter. We used \hl{a very strong} low-pass filter (standard deviation of 6 pixels) to remove all the noise. 

The final training loss combined both Noise2Noise loss (\ref{eq:n2n_loss}) and bias compensation (\ref{eq:bias_loss}) and the network was trained as:
\begin{equation}\label{eq:total_loss}
\mathbf{\Theta}^* = \arg\min_\mathbf{\Theta}L_{n2n}(\mathbf{\Theta}) + \beta L_{bias}(\mathbf{\Theta}),
\end{equation}
where $\beta$ is a hyperparameter to balance between noise reduction and bias reduction. 

Lack of samples near \hl{the peaks of the time-concentration curve} could also lead to larger bias at these time points. However, the \hl{rapidly} changing images near \hl{the} peaks contain important information about hemodynamics and are potentially crucial to the accuracy of parametric maps. Hence, reducing the bias near \hl{the} peaks should have \hl{a} higher impact in reducing the bias of final parametric maps. To achieve this, we increased the sampling rate near \hl{the} peaks during training. In each batch, whereas half of the training data were sampled randomly along the time \hl{dimension}, the other half were sampled within a small window of width 5 near the peak. To determine the position of the peak, we summed all the pixels of interest and looked for the maximum position along time:
\begin{equation}
t_{peak} = \arg\max_t \mathbf{m}^T\mathbf{x}(t),
\end{equation}
where $\mathbf{m}$ is a thresholding mask which excluded bones and major vessels. \hl{These hyperparameters were selected by a few trials and errors to balance between bias compensation and convergence speed. Heavier sampling near the peak will lead to reduced bias near the peak but slower convergence. If the peak is too much oversampled, other parts of the time-concentration curves may be biased. However, the results are generally not sensitive to these hyperparameters. }

\section{Experimental Setup}
\subsection{Datasets}
\subsubsection{Simulation}\label{sec:simulation_data}
We used the \hl{open-source} code from \cite{Aichert2013} to generate simulation phantoms. Several ellipses were replaced inside the phantom to simulate infarct cores and penumbra areas of stroke. The phantom had \hl{an} axial resolution of $256 \times 256$ with \hl{a} pixel size of 1 mm. We used 50 continuous slices for training and another 15 continuous slices for testing. There was a 5-slice gap between the training and testing groups to reduce data correlation. 50 frames of CTP images were simulated with 1 second time interval. 

To generate CTP images under different noise \hl{levels}, we first forward projected the images into 2D sinograms with realistic geometry given in table \ref{table:geometry} \cite{Wu2017}. Poisson noise \hl{was} added to the sinogram according to:
\begin{equation}
p_{noisy} = -\log\left(\frac{\mathrm{Poisson}(N_0\exp\{-p\})}{N_0}\right),
\end{equation}
where $p$ is the forward projected value, and $N_0$ is the assumed number of initial photons for the each ray. The CTP images were then reconstructed from noisy sinograms via filtered backprojection (FBP) with Hann filter. Distance-driven projector \cite{Liu2017} was used for the forward projector and pixel-driven projector was used for the FBP. Three noise levels were simulated with $N_0=1\times10^5, 2\times10^5, 1\times10^6$. \hl{$1\times10^5$ and $2\times10^5$ were chosen to match the noise levels of our real CTP images. $1\times10^6$ was chosen for relatively low-noise images.} Some of the training images are given in figure \ref{fig:data}. 

\begin{table}[t]
\renewcommand{\arraystretch}{1.3}
\caption{Parameters of the Simulation Geometry}
\label{table:geometry}
\centering
\begin{tabular}{ll}
\hline
Parameter & Value \\
\hline
Geometry							& Equiangular fan-beam \\
Pixel size of image			& 1 $\times$ 1 mm$^2$  \\
Resolution of image			& 256 $\times$ 256 \\
Views per rotation 			& 1152 \\
Number of detector units 	& 384 \\
Pixel size of detector			& 1.2858 mm \\
Source to ios-center distance 		& 595 mm \\
Source to detector distance 	& 1086.5 mm \\
\hline
\end{tabular}
\end{table}

\begin{figure}[t]
\centering
\includegraphics[width=\columnwidth]{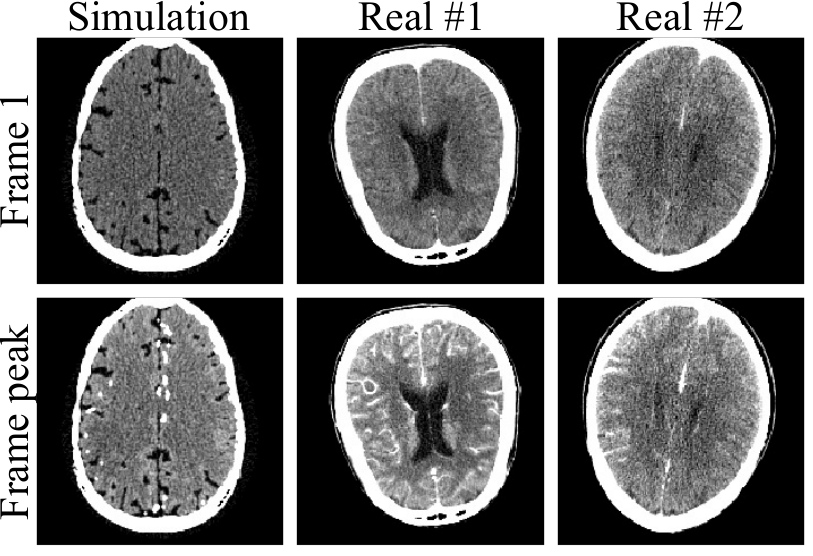}
\caption{Part of the training and testing dataset. The simulation images were under noise level of $N_0=2\times10^5$. The first row showed the \hl{first frame}, the second showed the frame at \hl{the peak of the time-concentration curve}. The display \hl{windows are} $40\pm80$ HU. }
\label{fig:data}
\end{figure}

\subsubsection{Real Data}
We used the CTP images from \hl{the 2018 ISLES} challenge dataset for real data validation \cite{Kistler2013, Maier2017}. 20 patients with similar noise appearance were used, where 16 \hl{were} used for training and 4 used for testing. Each patient had 2, 4, or 8 slices with 40 to 50 frames. All the patients were within 8 hours of symptom onset and a \hl{DWI was} done for each patient within 3 hours of CTP. Infarct cores derived from the DWI images were also provided. Some of the real data are illustrated in figure \ref{fig:data}.

\subsection{Preprocess and postprocess}
All the images were preprocessed by excluding bones and major blood vessels via thresholding. Bone masks were obtained by thresholding pixels larger than 120 HU in early frames. Only pixels inside the cortical bones were preserved. The major blood vessel masks were obtained by thresholding all the pixels with maximum intensity along time larger than 100 HU. 

We used \hl{the} ground truth \hl{aterial input function (AIF)} for parametric map calculation \hl{in} the simulation. For the real data, we used methods from \cite{Mouridsen2006, Kao2014} to automatically calculate venous output function (VOF) as well as AIF from unfiltered time frame images. Then AIF was scaled for partial volume correction by aligning its area under the curve with VOF. 

All the parametric maps were calculated \hl{by} methods in appendix \ref{app:ctp_method} after the time frames were denoised, except for TTV where denoising and deconvolution were done at the same time. 

\subsection{Quantitative metrics}\label{sec:metrics}
We used root-mean-square-error (RMSE) and structural similarity index (SSIM) \cite{Wang2004} of the parametric maps against noiseless images for the quantitative evaluation of simulation data. 

For the real data, we selected \hl{relatively} flat \hl{regions of interests (ROI)} inside normal white matter to calculate bias and standard deviation (std) of the denoised time-concentration images $\mathbf{c}_d(t)$. The bias was calculated against \hl{the} original noisy \hl{time-concentration images $\mathbf{c}(t)$}. Contrast-to-noise ratio (CNR) was also calculated for infarct cores against flat ROIs inside normal white matter for CBF images. The infarct cores were annotated on the CBF images referring to the regions derived from DWI. 

\subsection{Hyperparameters for Noise2Noise}
We used 2D UNet \cite{Ronneberger2015, Jin2017} as the backbone network for the Noise2Noise. The network was trained for 100 epochs with \hl{a} batch size of 20. The training algorithm was Adam \cite{Kingma2014} with \hl{a} learning rate of $10^{-4}$. For each batch, random time points from random slices were selected with oversampling near the peaks of \hl{the} time-concentration curves, according to section \ref{sec:bias_compensation}. $\beta$ was determined via parameter sweeping. The one that achieved \hl{the} best RMSE was selected for simulation, and the one that achieved \hl{the} best CNR was selected for real data. 

The image values were normalized to HU / 150 for \hl{the network input} and HU / 25 for \hl{the} output. Image augmentations were done by random \hl{flips} along x and y directions. 

\subsection{Comparison methods}
\subsubsection{No Filter}
Unfiltered parametric maps were directly calculated for each pixel via SVD with Tikhonov regularization \cite{Fieselmann2011}. 

\subsubsection{Gaussian Filter}
Spatially invariant Gaussian filter was applied to the time frames before deconvolution. The strength of \hl{the} filter was controlled with a single parameter $\sigma_g$.

\subsubsection{TIPS}
Time-intensity profile similarity (TIPS) \cite{Mendrik2011} is a bilateral filter where the weight between two pixels is determined by averaging their distances along the time. Two parameters corresponding to the filter strength along time and spatial domain, $\sigma_t$ and $\sigma_s$, were used in TIPS. 

\subsubsection{TTV}
Tensor total variation (TTV) \cite{Fang2015} applied TV prior to the pulse response functions $\mathbf{r}(t)$ (see appendix \ref{app:ctp_method}) in both time and spatial domain. To align the model bias with other methods, we added Tikhonov regularization to the loss function, which gave the following loss function for deconvolution:
\begin{equation}
\begin{split}
\mathbf{r}^* =  \arg\min_\mathbf{r} & \frac{1}{2}\norm{\mathbf{Ar - c}}_F^2 + \lambda\norm{\mathbf{r}}_F^2 \\
+ & \beta_s \norm{\nabla_x \mathbf{r}}_1 + \beta_s \norm{\nabla_y \mathbf{r}}_1 + \beta_t \norm{\nabla_t \mathbf{r}}_1,
\end{split}
\end{equation}
where $\mathbf{r}=(\mathbf{r}_1,\dots,\mathbf{r}_J)$ \hl{contains all the pulse response functions from the $J$ pixels} and $\mathbf{c}=(\mathbf{c}_1, \dots, \mathbf{c}_J)$ \hl{contains all the time-concentration curves}. $\mathbf{A}$ is built from AIF. $\nabla_x$, $\nabla_y$ and $\nabla_t$ are forward differential operators along the $x$, $y$ and $t$ directions. $\beta_s$ and $\beta_t$ are used to control the strength of TTV. 

\subsubsection{Supervised Learning}
Supervised learning was used to provide the reference to best possible performance using the same network structure in simulation. The same 2D UNet was trained to map noisy CTP frames to noiseless frames with loss function similar to (\ref{eq:n2n_loss}):
\begin{equation}
\begin{split}
L_{n2c}&(\mathbf{\Theta}) = \frac{1}{2N}\sum_{i=1}^N\frac{1}{T_i-2}\sum_{t=2}^{T_i-1}\sum_{t_0=1}^2 \\
					&\norm{\hl{f(\mathbf{x}_i(t), \mathbf{x}_i(t_0);\mathbf{\Theta})} - \mathbf{c}_{ig}(t)}_2^2,
\end{split}
\end{equation}
where
\begin{equation}
\hl{\mathbf{c}_{ig}(t) = \mathbf{x}_{ig}(t) - \frac{\mathbf{x}_{ig}(1) + \mathbf{x}_{ig}(2)}{2}}
\end{equation}
is the noiseless concentration \hl{map} at time $t$, \hl{and $\mathbf{x}_{ig}(1)$ and $\mathbf{x}_{ig}(2)$ are the two early frames to estimate non-contrast CT.} The bias compensation term (\ref{eq:bias_loss}) were also added to the supervised training loss, which could effectively prevent overfitting. 

Networks trained by supervised learning on $N_0=2\times10^5$ simulation data was further applied to denoise the real data. $N_0=2\times10^5$ \hl{was chosen} because the images had the most similar noise level to the real data. 

\subsubsection{Determination of Hyperparameters}
The hyperparameters of all the methods were determined by grid search. RMSE \hl{of CBF} was used as the selecting criteria for simulation data. For the real data, CNR \hl{of CBF} was used as the main criteria. However, it was found that the comparison methods would give strongly oversmoothed images at best CNR, so the final hyperparameter was adjusted by matching spatial resolution and noise level with the corresponding simulation results at $N_0=2\times10^5$. 

\section{Results}
\subsection{Simulation Results}

\begin{figure*}[t]
\centering
\includegraphics[width = 0.9\linewidth]{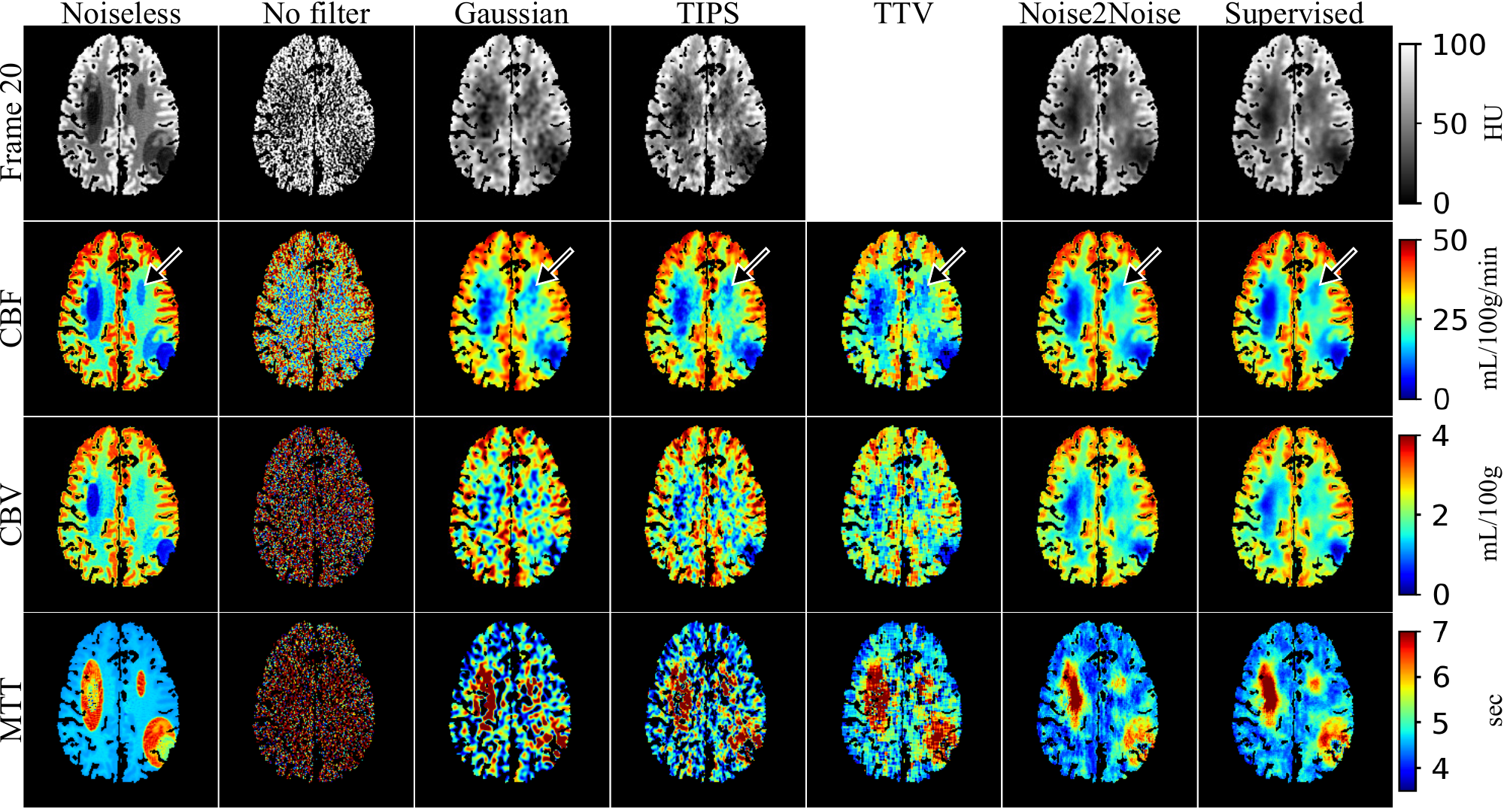}
\caption{Denoising results of all the algorithms for one of the testing simulation slices under $N_0=2\times10^5$. The concentration maps at the 20th frame (near the peak of AIF), CBF, CBV and MTT are given. The \hl{time-concentration} map is not given for TTV since TTV does not denoise the time frame images. }
\label{fig:simul_images}
\end{figure*}

\begin{figure}[t]
\centering
\includegraphics[width=\columnwidth]{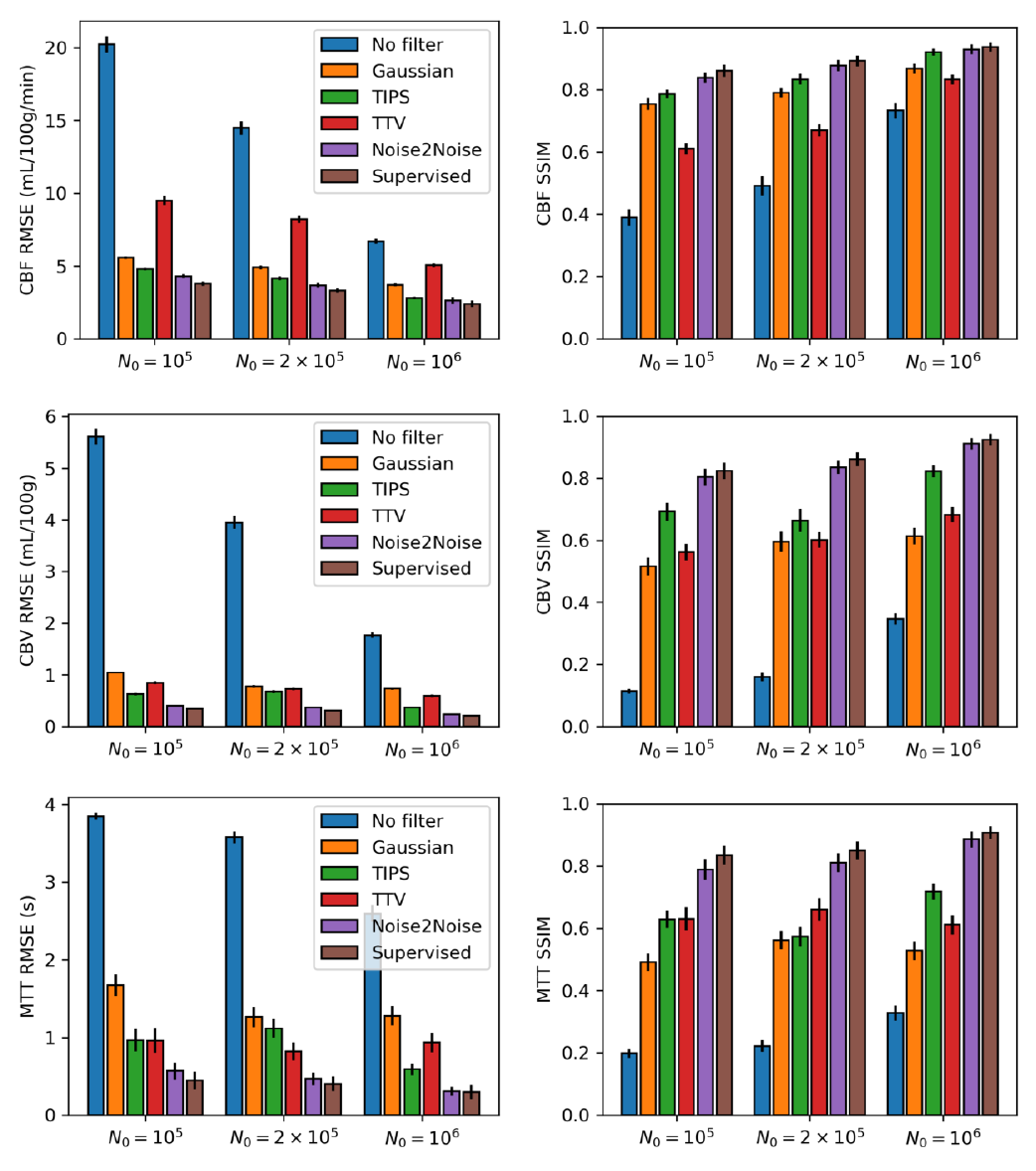}
\caption{RMSEs and SSIMs of CBF, CBV and MTT of testing simulation slices compared to noiseless results.  }
\label{fig:simul_quantification}
\end{figure}

\begin{figure}[t]
\centering
\includegraphics[width=\columnwidth]{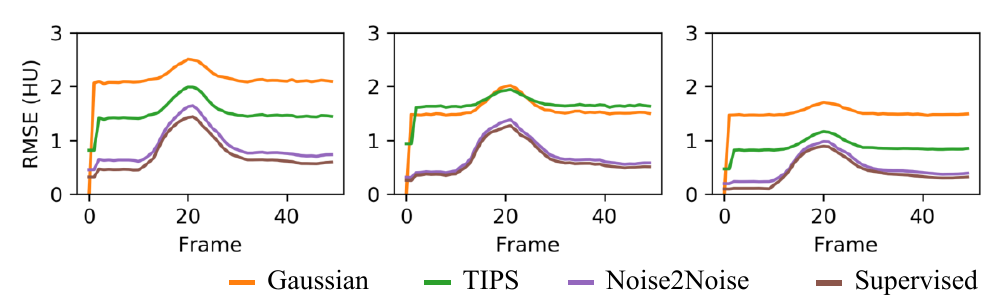}
\caption{RMSEs of frame images at each time point and noise level. The three figures are corresponding to $N_0=1\times10^5, 2\times10^5, 1\times10^6$ from left to right. }
\label{fig:simul_rmse_time}
\end{figure}

The \hl{denoising} results for one of the testing slices under $N_0=2\times10^5$ are given in figure \ref{fig:simul_images}. The parametric maps were almost unreadable if no filter was applied to the CTP images. Both Noise2Noise and supervised learning achieved significantly reduced noise in the \hl{time-frame}, CBF, CBV\hl{,} and MTT images, where the gain was the most significant in CBV and MTT images. In the CBF images, the small penumbra area pointed by the black arrows were severely distorted in the Gaussian, TIPS and TTV results, but were much better restored in the Noise2Noise and supervised learning results. Noise2Noise had similar CBF and CBV images compared to supervised learning, but it had slightly noisier MTT images due to the noise amplification of the dividing. 

Figure \ref{fig:simul_quantification} gives the RMSEs and SSIMs of CBF, CBV and MTT of the testing slices compared to noiseless results. Noise2Noise and supervised learning consistently outperformed the conventional methods for both RMSE and SSIM under all the three noise levels. Noise2Noise also had RMSE and SSIM close to supervised learning, although Noise2Noise did not have access to noiseless data during training. For $N_0=2\times10^5$ and $N_0=10^6$, supervised learning outperformed Noise2Noise by no more than 10\%, 15\% and 15\% in terms of RMSE of CBF, CBV and MTT, respectively. For $N_0=10^5$, supervised learning outperformed Noise2Noise by no more than 12\%, 15\% and 21\% for the RMSEs. 

Figure \ref{fig:simul_rmse_time} shows the change of RMSEs of denoised \hl{time-frame} images along the time. Unfiltered results were not included because they were significantly higher than the others. There was also no result for TTV since TTV does not denoise the \hl{time-frame} images. The peaks of the RMSE curves are corresponding to the peaks of AIF, where the image intensity greatly increased. Both Noise2Noise and supervised learning had significantly lower frame-wise RMSE compared to Gaussian filter and TIPS, whereas Noise2Noise had close RMSE to supervised learning. The difference between peak and baseline RMSEs was also greater for Noise2Noise and supervised learning compared to Gaussian and TIPS, which might be caused by the lack of training samples near \hl{the} peaks of time-concentration curves. 

\subsection{Influence of $\beta$}

\begin{figure}[t]
\centering
\includegraphics[width=\columnwidth]{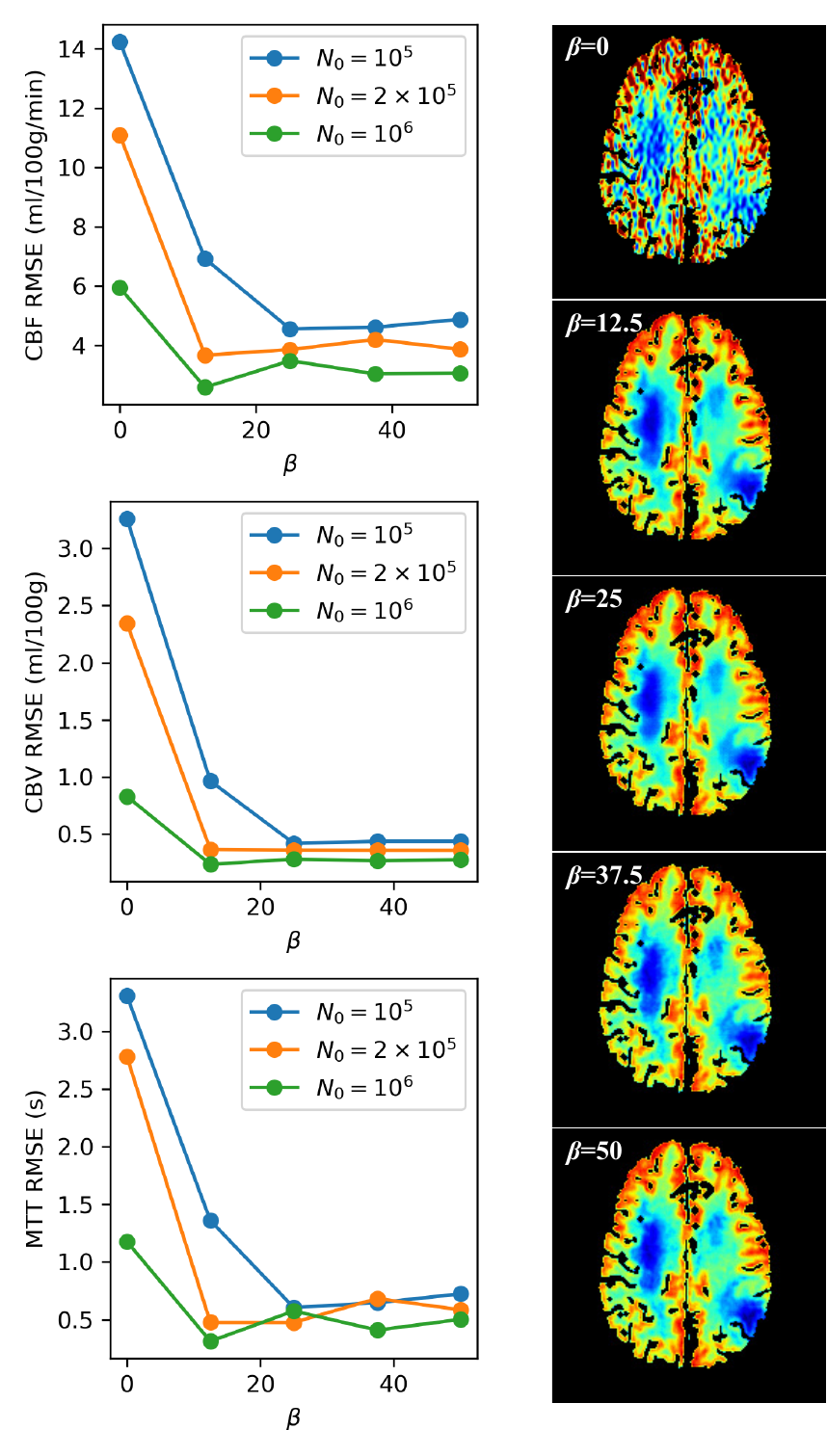}
\caption{Influence of $\beta$ for the denoising performance in simulation. The left column is the testing RMSEs of CBF, CBV and MTT under different $\beta$ and noise levels. The right column is the CBFs under $N_0=2\times10^5$ with different $\beta$ values.  }
\label{fig:simul_beta}
\end{figure}

The hyperparameter $\beta$ for the bias compensation term in Noise2Noise was tuned from 0 to 50 to investigate its influence on the denoising performance. The testing RMSEs of the simulation data under all three noise levels are given in figure \ref{fig:simul_beta}. The CBF images of one testing slice for $N_0=2\times10^5$ are also shown in the figure. 

The RMSEs generally decreased as $\beta$ increased. With large $\beta$ values, the RMSEs remained stable with a slightly increasing trend. There was no subtle visual difference of the CBF maps when $\beta$ was sufficiently large. The bias compensation term also acted as an efficient approach to avoid overfitting. As demonstrated by the CBF images in figure \ref{fig:simul_beta}, the network overfitted to noise when $\beta=0$ mainly due to lack of training samples. The overfitting was overcome with larger $\beta$ values. 

\subsection{Real Data Results}

\begin{figure*}[t]
\centering
\includegraphics[width = 0.9\linewidth]{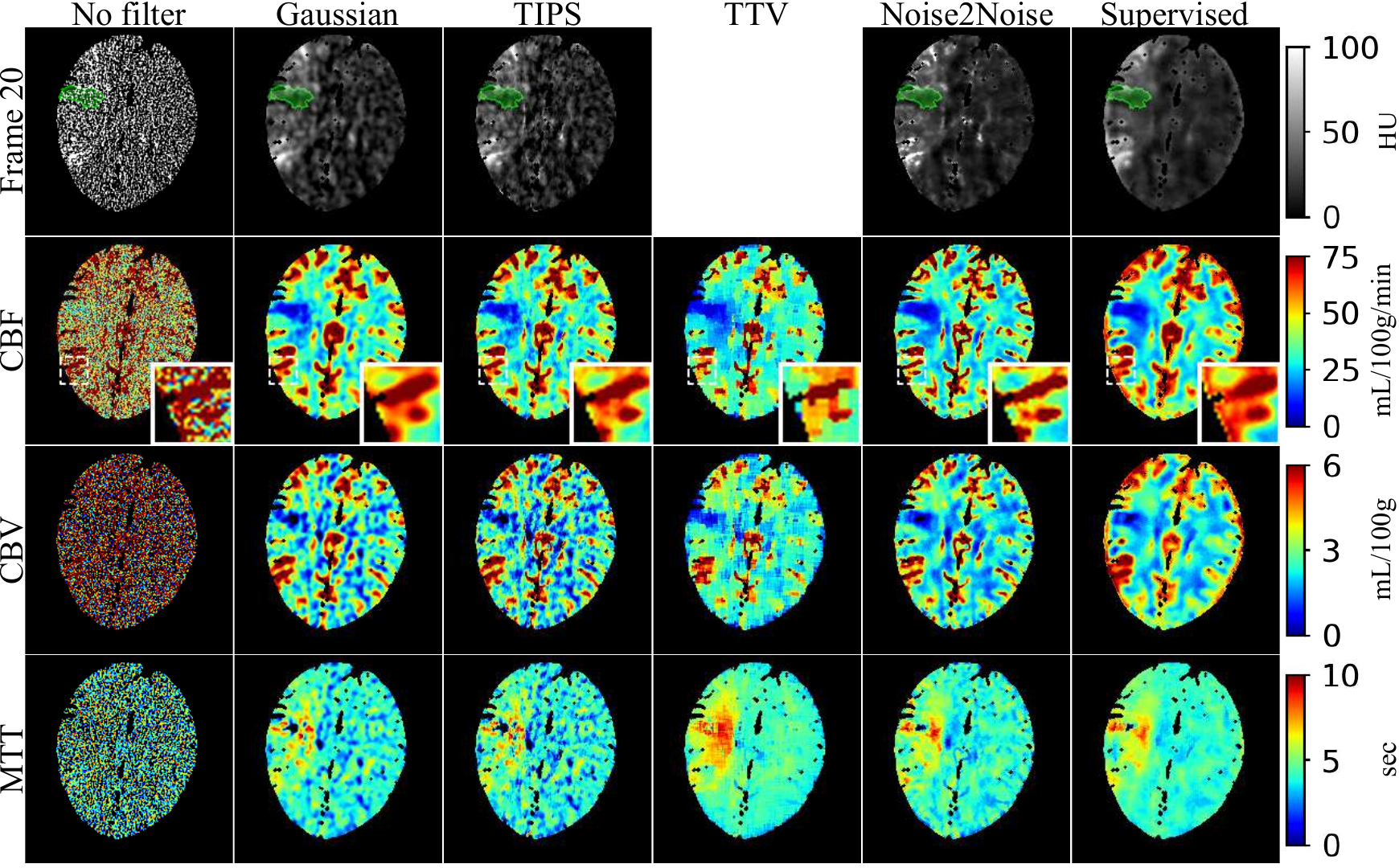}
\caption{Denoising results of a testing slice from the real dataset. The concentration maps at the 20th frame, CBF, CBV and MTT images are shown. A gray matter region is zoomed in for the CBF images to demonstrate the difference in spatial resolution. \hl{The time-concentration map} is not given for TTV because it does not denoise the time frames. \hl{The ischemic core derived from the DWI image is shown as the green overlay on the time-frame images. It should be noted that in supervised learning, the network was trained on simulation images and applied to real images.}}
\label{fig:real_images}
\end{figure*}

\begin{figure}[t]
\centering
\includegraphics[width=\columnwidth]{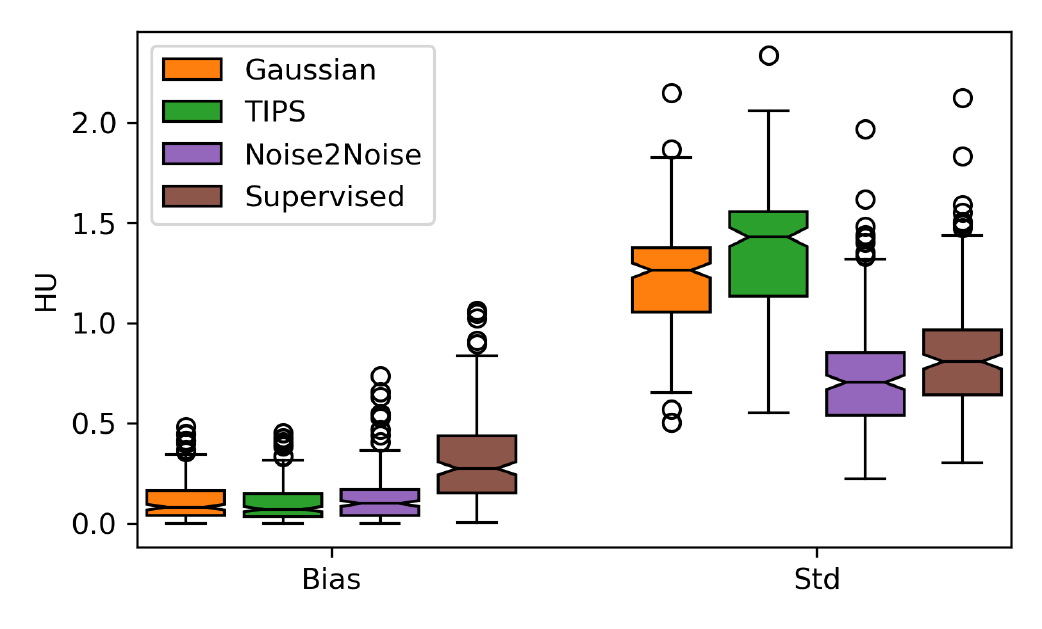}
\caption{The bias and std of the ROI inside normal white matter in all the denoised concentration maps. Each value at each time point was considered as a data point for the box plot. }
\label{fig:real_quantification}
\end{figure}

\begin{table}[t]
\renewcommand{\arraystretch}{1.3}
\caption{Mean CNRs of the testing CBF maps}
\label{table:cnr}
\centering
\begin{tabular}{llll}
\hline
Denoising Method	& Contrast 		& Std 			& CNR 	\\
				 	& (mL/100g/min)	& (mL/100g/min)	&		\\
\hline
No filter		 	& 48.27			& 150.07		& 0.30 	\\
Gaussian filter		& 100.76		& 32.32			& 2.83  \\
TIPS				& 100.04		& 34.20			& 2.64 	\\
TTV 				& 158.54		& 37.69			& 5.35 	\\
Noise2Noise 		& 100.25		& 25.90			& 3.29 	\\
Supervised learning	& 86.05			& 26.21			& 2.57 	\\
\hline
\end{tabular}
\end{table}

The denoising results of one testing slice are given in figure \ref{fig:real_images}. Similar to the simulation results in figure \ref{fig:simul_images}, unfiltered parametric maps are almost unreadable and denoising was necessary. CBF images from Gaussian, TIPS, TTV and Noise2Noise demonstrated similar image quality to the simulation results at $N_0=2\times10^5$ in figure \ref{fig:simul_images}. In the \hl{zoomed-in} gray matter areas of the CBF images, it could be observed that Noise2Noise had significantly improved spatial resolution compared to the other methods. Noise2Noise also had significantly reduced noise in CBV and MTT images compared to both Gaussian filter and TIPS. The TTV images had substantial blocky artifacts and structural bias of the gray matter compared to other methods. 

The supervised learning results had \hl{a} substantial blur of the gray matter compared to Noise2Noise, mainly due to the inconsistency between the simulation training data and the real testing data. The inconsistency also led to artifacts at the edge of cerebrum where the \hl{time-}concentration maps, CBF and CBV had larger value. 

Figure \ref{fig:real_quantification} \hl{shows} the bias and std of ROIs inside normal white matter in denoised concentration maps. Noise2Noise had slightly larger bias compared to Gaussian and TIPS, but significantly lower bias compared to supervised learning. Most frames had bias less than 0.5 HU for Noise2Noise. Noise2Noise had the least noise level among all the methods. There was no \hl{result} for TTV in figure \ref{fig:real_quantification} because it does not denoise the time frames. 

Table \ref{table:cnr} \hl{shows} the mean contrasts, stds, and CNRs of the testing CBF maps which were calculated according to section \ref{sec:metrics}. Noise2Noise had similar contrast with Gaussian filter and TIPS and the least std among all the methods. Unfiltered CBF had lower contrast compared to Gaussian and \hl{TIPS because the max operator} in the CBF calculation (\ref{eq:cbf}) led to non-zero-mean noise in the CBF images. Lower contrast was also observed for supervised learning results where bias was caused by inconsistent training and testing data. TTV had the best CNR because of the significantly larger contrast compared to the Gaussian filter and TIPS, which was mainly due to the overestimation of CBF of the reference white matter ROI. Noise2Noise had the best CNR among all the three methods (Gaussian, TIPS, Noise2Noise) without significant bias. 

\hl{
\subsection{Time Costs}
We further measured the testing time cost of all the methods and the training cost of Noise2Noise on the real dataset. The benchmarking was conducted on a computer with Intel Xeon Silver 4110 CPU @ 2.10GHz with 32 cores, a memory of approximately 97 GB, and a GPU of Nvidia GeForce GTX 1080 Ti.

The results are given in table \ref{table:time}. All the testing time costs were benchmarked on one CTP image with 1 slice and 44 frames. The "No filter" method counted only the time to calculate CBF, CBV, and MTT images on CPU. Gaussian, TIPS, and Noise2Noise (testing) included both time-frame denoising and parametric map calculation. TTV only included parametric map calculation since the denoising was embedded inside it. The Noise2Noise (training) is the network training time, where the number of training pairs was 4192, the batch size was 20, and the number of epochs was 100. 

It should be noted that the Noise2Noise had short testing time which could satisfy practical applications. The training time was also reasonable to reach good performance. 

\begin{table}[t]
\renewcommand{\arraystretch}{1.3}
\caption{\hl{Estimated time costs}}
\label{table:time}
\centering
\begin{tabular}{ll}
\hline
Method						& Time 			\\
\hline
No filter (CPU)		 		& 0.159s		\\
Gaussian filter (CPU)		& 0.233s		\\
TIPS (GPU)					& 1.098s	 	\\
TTV (CPU)					& 53.140s 	 	\\
Noise2Noise (Testing, GPU)	& 1.492s	 	\\
Noise2Noise (Training, GPU)	& $\approx$3 hours	 	\\
\hline
\end{tabular}
\end{table}
}

\section{Discussion and Conclusion}
\hl{In this paper,} we proposed a self-supervised learning method for dynamic CT perfusion image denoising based on the Noise2Noise principle. The main advantage of the method was that the training of the denoising deep neural network did not require \hl{high-quality} reference images, which are hard to acquire for CTP due to radiation dose concerns. It could overcome the problem of supervised learning that the performance will deteriorate when \hl{the testing data have a different distribution than the training data} since testing data itself can be used for training. Furthermore, it only required the time frame images which can be easily acquired, instead of projection data. 

The method achieved improved image quality on CBF, CBV and MTT images compared to denoising algorithms including Gaussian filter, TIPS and TTV on both simulation and real \hl{datasets}. \hl{In the simulation results, although supervised methods achieved better visual image quality and quantitative metrics, Noise2Noise was not significantly inferior. In the real data results, because clean training images are not available, supervised learning had dramatically reduced image quality by applying the network trained on the simulation data. Noise2Noise, on the other hand, maintained good performance and achieved improved spatial resolution and CNR over the supervised learning. }

\hl{It should be noted that one limitation of the Noise2Noise framework is the assumption of zero-mean noise. Non-zero-mean artifacts such as scatter, beam-hardening, metal artifacts, and motion artifacts can exist in the CTP images, which will breach the assumption for Noise2Noise. However, the bias in the noise will not lead to catastrophic failure in either training or testing. In principle, Noise2Noise training will converge to the average of the noise, so the bias part will be kept while noise is reduced. A brief analysis will be given in the appendix. Figure \ref{fig:real_images_bias} shows two testing slices with metal / motion artifacts. Despite the severe artifacts, there is no catastrophic failure of the network. It should be noted that the Noise2Noise results had biased estimation near the motion artifacts compared to TIPS. This is a network generalization problem that every deep learning method meets since motion did not appear in the training dataset. 

Another possible source of non-zero-mean noise comes from ultra-low-dose CT scans. The image reconstruction chain often leads to non-zero-mean noise in ultra-low-dose CT images, which may lead to biased parametric maps. In this study, we did not try further reduce the dose to below that in the 2018 ISLES challenges. However, current deconvolution-based parametric map estimation methods are already biased, due to the Tikhonov regularization during the deconvolution\cite{Fieselmann2011}. Scaling is commonly used to get the correct quantification\cite{Konstas2009, Wittsack2008, Leenders1990}. Furthermore, conventional denoising methods, including Gaussian filter and TIPS, cannot correct the bias in the noise. The proposed Noise2Noise will have a similar bias level compared to these methods. 

}

\begin{figure}[t]
\centering
\includegraphics[width=\columnwidth]{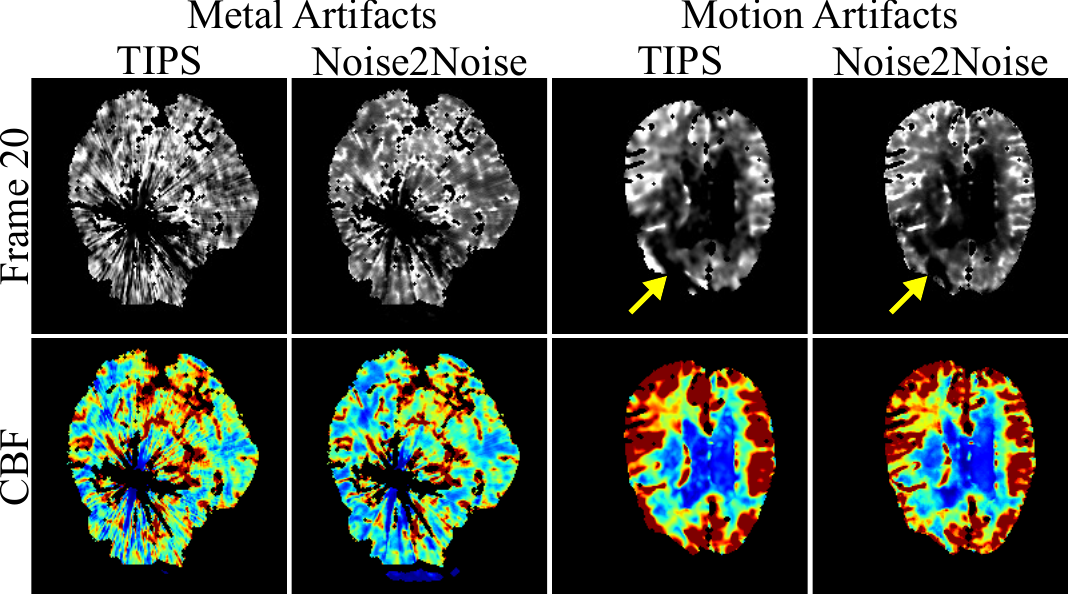}
\caption{\hl{Two slices of testing images with non-zero mean artifacts. The left two columns give a slice with metal artifacts. The metal implant in the middle of the brain was removed during preprocess. The right two columns give a slice with motion artifacts, which is marked by the yellow arrows on the time-frame images. The display windows are the same as that in figure \ref{fig:real_images}. }}
\label{fig:real_images_bias}
\end{figure}

\hl{The denoised images by Noise2Noise and supervised learning appears to be smoother than the ground truth. The main reasons are the high noise level in the original images and the L2-loss we used for the training. It is possible to further reduce the smoothing level by applying an additional penalty between the denoised images and the noisy images\cite{Wu2019} in the Noise2Noise. However, it may introduce additional noise and bias into the parametric maps. We did not add the extra penalty for maximal noise reduction and quantification accuracy. }

CTP has been demonstrated effective in selecting patients to receive mechanical thrombectomy within 6 to 24 hours from symptom onset based on DAWN and DIFFUSE 3 trials \cite{Powers2018, Nogueira2018, Albers2018}. However, CTP faces challenges of high radiation exposure and low spatial resolution compared to DWI \hl{despite} its advantage in CT's availability \cite{Gonzalez2012}. CTP has also been demonstrated to have low sensitivity to lacunar strokes which have small size and composed approximately 20\% of all the ischemic stroke, partially due to the artifacts in the parametric maps \cite{Biesbroek2013, Benson2016}. A recent study demonstrated the effectiveness of intravenous thrombolysis on lacunar stroke within 4.5 hours of symptom onset, which raised the potential needs for fast detection of lacunar stroke \cite{Barow2019}. There is \hl{the possibility} that the improved image quality and spatial resolution brought by the proposed deep learning approach could improve the utility of CTP in stroke treatment. 

The proposed method can be further improved to include more frames in the input to further reduce the noise level of denoised time frame images by using a recurrent neural network. It is also possible to expand the framework to other dynamic imaging scenarios such as myocardial CT perfusion, arterial spin label imaging, or dynamic positron emission tomography \cite{Ho2010, Alsop1998, Gong2018}.

\hl{
\section{Appendix}
\subsection{Details on Calculation of Parametric Maps}\label{app:ctp_method}
We used \hl{the deconovlution method} to calculate the parametric maps \cite{Fieselmann2011}. \hl{Given $\textrm{AIF}(t)$}, which is the time-concentration curve inside the arteries that can be automatically estimated from CTP images \cite{Mouridsen2006, Kao2014}, the time-concentration curve of voxel $j$, $c_j(t)$, can be expressed as a convolution:
\begin{equation}\label{eq:c_aif}
c_j(t) = (\mathrm{AIF} * r_j)(t), 
\end{equation}
where $r_j(t)$ is the response function of the tissue to pulse input. The parametric maps can be calculated from $r_j(t)$ as:
\begin{align}
\mathrm{CBF}_j &= \frac{1}{\rho_j}\max_t(r_j(t)) \label{eq:cbf}\\
\mathrm{CBV}_j &= \frac{1}{\rho_j}\sum_t r_j(t)\Delta t \\
\mathrm{MTT}_j &= \frac{\mathrm{CBV}_j}{\mathrm{CBF}_j} \\
\mathrm{TTP}_j &= \arg\max_t c_j(t)\Delta t 
\end{align}

$r_j(t)$ can be calculated by solving the matrix form of (\ref{eq:c_aif}). It was demonstrated in \cite{Wu2003, Wittsack2008} that it could correct the local delay of AIF by building a block-circulant convolution matrix, $\mathbf{A} \in \mathcal{R}^{M\times M}, M \geq 2T$. The length of $\mathbf{c}(t)$ and $\mathrm{AIF}(t)$ was first increased to $M$ by zero-padding, and matrix $\mathbf{A}$ was defined as:
\begin{equation}
A_{ij} = \Big\{
\begin{array}{ll}
\mathrm{AIF}(i-j+1)\Delta t, & i\leq j \\
\mathrm{AIF}(M+i-j+1)\Delta t, & i > j
\end{array}
\end{equation} 

The deconvolution problem (\ref{eq:c_aif}) became solving the following linear equation:
\begin{equation}\label{eq:convolution}
\mathbf{c}_j = \mathbf{Ar}_j,
\end{equation}
where $\mathbf{c}_j = (c_j(1), \dots, c_j(T), 0, \dots, 0)^T \in \mathcal{R}^M$. 

Equation (\ref{eq:convolution}) was solved via singular value decomposition (SVD) with Tikhonov regularization \cite{Fieselmann2011}, which actually solved the following problem:
\begin{equation}
\hl{\mathbf{r}_j = \arg\min_{\mathbf{r}_j'} \norm{\mathbf{Ar}_j' - \mathbf{c}_j}_2^2 + \lambda^2 \norm{\mathbf{r}_j'}_2^2,}
\end{equation}
where $\lambda=\lambda_\textrm{rel}\sigma_\textrm{max}$, and $\sigma_\textrm{max}$ is the largest singular value of $\mathbf{A}$. We used $\lambda_\textrm{rel}=0.3$ according to \cite{Fieselmann2011}. An additional scaling factor were applied to CBF and CBV to correct bias. In the simulation, the scaling factors minimized the L2 distance between calculated CBF and CBV maps with the ground truth. In the real data, they were chosen so that a region-of-interest (ROI) selected in the normal white matter had average CBF and CBV of 22 mL/100g/min and 2 mL/100g, respectively \cite{Konstas2009, Wittsack2008, Leenders1990}.

\subsection{Proof of Noise2Noise}\label{app:noise2noise}
We will provide a proof that the training cost function of Noise2Noise is equivalent to training with clean images. The proof is basically the same with \cite{Wu2019}, but we put it here for the self-consistency of the paper. 

\begin{thm}\label{thm:noise2noise}
The following equation holds:
\begin{equation}\label{eq:thm1}
\begin{split}
	& \frac{1}{N}\sum_i\norm{f(\mathbf{x}_i + \mathbf{n}_{i1};\mathbf{\Theta}) - (\mathbf{x}_i + \mathbf{n}_{i2})}_2^2 \\
=  & \frac{1}{N}\sum_i\norm{f(\mathbf{x}_i + \mathbf{n}_{i1};\mathbf{\Theta}) - \mathbf{x}_i}_2^2 + C,
\end{split}
\end{equation}
where $C$ is irrelevant to $\mathbf{\Theta}$, if the following conditions are satisfied:

1. $N\rightarrow\infty$;

2. Conditional expectation $E\{\mathbf{n}_{i2} | \mathbf{x}_i\} = 0$;

3. $\mathbf{n}_{i1}$ and $\mathbf{n}_{i2}$ are independent;

4. $\forall i$, $f(\mathbf{x}_{i} + \mathbf{n}_{i1}; \mathbf{\Theta}) < \infty$, $\mathbf{n}_{i2}<\infty$

\end{thm}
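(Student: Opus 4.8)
The plan is to expand the squared norm on the left-hand side of (\ref{eq:thm1}) and show that the two extra pieces it produces are, respectively, a $\mathbf{\Theta}$-independent constant and a term that vanishes in the limit. Writing $f_i = f(\mathbf{x}_i + \mathbf{n}_{i1};\mathbf{\Theta})$ for brevity, I would start from the algebraic identity
\begin{equation*}
\norm{f_i - (\mathbf{x}_i + \mathbf{n}_{i2})}_2^2 = \norm{f_i - \mathbf{x}_i}_2^2 - 2(f_i - \mathbf{x}_i)^T\mathbf{n}_{i2} + \norm{\mathbf{n}_{i2}}_2^2 .
\end{equation*}
Averaging over $i$, the first term on the right is exactly the clean-target cost appearing in (\ref{eq:thm1}); the last term contributes $C = \lim_{N\to\infty}\frac{1}{N}\sum_i\norm{\mathbf{n}_{i2}}_2^2$, which does not involve $\mathbf{\Theta}$; so the whole claim reduces to showing that the cross term $\frac{1}{N}\sum_i (f_i - \mathbf{x}_i)^T\mathbf{n}_{i2}$ tends to $0$ as $N\to\infty$.

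For the cross term, the key observation is that $f_i$ is a deterministic function of $\mathbf{x}_i + \mathbf{n}_{i1}$ and therefore carries no information about $\mathbf{n}_{i2}$ beyond what $\mathbf{x}_i$ and $\mathbf{n}_{i1}$ already carry. Concretely, I would use the tower property to write
\begin{equation*}
E\{(f_i - \mathbf{x}_i)^T\mathbf{n}_{i2}\} = E\left\{(f_i - \mathbf{x}_i)^T\, E\{\mathbf{n}_{i2} \mid \mathbf{x}_i, \mathbf{n}_{i1}\}\right\},
\end{equation*}
and then argue $E\{\mathbf{n}_{i2} \mid \mathbf{x}_i, \mathbf{n}_{i1}\} = E\{\mathbf{n}_{i2} \mid \mathbf{x}_i\} = 0$: condition 3 lets us drop $\mathbf{n}_{i1}$ from the conditioning without changing the conditional mean, and condition 2 makes the remaining conditional mean zero. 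Hence every summand of the cross term is centered. Condition 4 (boundedness of $f_i$ and $\mathbf{n}_{i2}$) ensures these summands have finite second moments, so a law-of-large-numbers argument — condition 1 is precisely $N\to\infty$ — gives $\frac{1}{N}\sum_i (f_i - \mathbf{x}_i)^T\mathbf{n}_{i2}\to 0$. Collecting the three pieces yields (\ref{eq:thm1}).

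I expect the cross-term step to be the main obstacle, specifically the justification of $E\{\mathbf{n}_{i2}\mid\mathbf{x}_i,\mathbf{n}_{i1}\}=0$: this requires the independence assumption to be strong enough that adjoining $\mathbf{n}_{i1}$ to the conditioning $\sigma$-algebra does not disturb the already-zero conditional mean of $\mathbf{n}_{i2}$, and it requires enough integrability and independence across the index $i$ for the averaged cross term to genuinely converge rather than merely be centered term by term. I would therefore keep the argument at the level of expectations, treating conditions 1--4 as the working hypotheses, following the same line as in \cite{Wu2019}, rather than pursuing the sharpest possible moment conditions.
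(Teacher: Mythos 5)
Your proposal is correct and follows essentially the same route as the paper: expand the square, isolate the $\mathbf{\Theta}$-independent remainder as $C$, and kill the cross term via the tower property with $E\{\mathbf{n}_{i2}\mid\mathbf{x}_i,\mathbf{n}_{i1}\}=E\{\mathbf{n}_{i2}\mid\mathbf{x}_i\}=0$ (conditions 3 and 2), with condition 4 guaranteeing the bounded variance needed for the $N\to\infty$ averaging argument (the paper phrases this via the central limit theorem rather than a law of large numbers, which is the same substance). The only cosmetic difference is that you fold $2\mathbf{x}_i^T\mathbf{n}_{i2}$ into the vanishing cross term while the paper places it inside $C$; both treatments are valid since that piece is $\mathbf{\Theta}$-independent and handled by the same conditioning argument.
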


\begin{proof}
For simplicity, let $\mathbf{f}_i = f(\mathbf{x}_i + \mathbf{n}_{i1}; \mathbf{\Theta})$, then we have the following for the left hand side of (\ref{eq:thm1}):
\begin{equation}\label{eq:expand}
\begin{split}
& \frac{1}{N}\sum_i\norm{\mathbf{f}_i - (\mathbf{x}_i + \mathbf{n}_{i2})}_2^2 \\
= &\frac{1}{N}\sum_i\norm{\mathbf{f}_i - \mathbf{x}_i}_2^2 - \frac{1}{N}\sum_i 2\mathbf{n}_{i2}^T\mathbf{f}_i + \\
 &\frac{1}{N}\sum_i (\mathbf{n}_{i2}^T\mathbf{n}_{i2} + 2\mathbf{n}_{i2}^T\mathbf{x}_i).
\end{split}
\end{equation}

The last term is the $C$ in (\ref{eq:thm1}) because it is irrelevant to $\mathbf{\Theta}$. Then the only difference between (\ref{eq:expand}) and the right hand side of (\ref{eq:thm1}) is the second term. Because $N\rightarrow\infty$, according to Lindeberg-Levy central limit theorem:
\begin{equation}
\frac{1}{N}\sum_i 2\mathbf{n}_{i2}^T\mathbf{f}_i \xrightarrow{d} \mathcal{N}(E\{2\mathbf{n}_{i2}^T\mathbf{f}_i\}, \frac{1}{N}\sigma^2\{2\mathbf{n}_{i2}^T\mathbf{f}_i\}),
\end{equation}
where $E\{\cdot\}$ is the expectation, $\sigma^2\{\cdot\}$ is the variance, and $\mathcal{N}(\mu, \sigma^2)$ is a Gaussian distribution with mean $\mu$ and variance $\sigma^2$. 

Because both $\mathbf{f}_i$ and $\mathbf{n}_{i2}$ are bounded by condition 4, $\sigma^2\{2\mathbf{n}_{i2}^T\mathbf{f}_i\}$ is bounded so $\sigma^2\{2\mathbf{n}_{i2}^T\mathbf{f}_i\} / N \rightarrow 0$. Hence, the Gaussian distribution will converge to its mean value, which means:
\begin{equation}\label{eq:limit_of_second_term}
\frac{1}{N}\sum_i 2\mathbf{n}_{i2}^T\mathbf{f}_i \rightarrow E\{2\mathbf{n}_{i2}^T\mathbf{f}_i\} = 2E\{\mathbf{f}_{i}^T E\{\mathbf{n}_{i2} | \mathbf{f}_{i}\}\}
\end{equation}

Because $\mathbf{f}_i$ is a deterministic function of $\mathbf{x}_i$ and $\mathbf{n}_{i1}$, we have:
\begin{equation}\label{eq:conditional_expection}
E\{\mathbf{n}_{i2}|\mathbf{f}_i\} = E\{\mathbf{n}_{i2}|\mathbf{x}_i, \mathbf{n}_{i1}\}
\end{equation}

Because $\mathbf{n}_{i2}$ is independent from $\mathbf{n}_{i1}$, we have:
\begin{equation}\label{eq:expection_0}
E\{\mathbf{n}_{i2}|\mathbf{x}_i, \mathbf{n}_{i1}\} = E\{\mathbf{n}_{i2}|\mathbf{x}_i\} = 0
\end{equation}

Substitute (\ref{eq:expection_0}) into (\ref{eq:conditional_expection}) and (\ref{eq:limit_of_second_term}), we have:
\begin{equation}
\frac{1}{N}\sum_i 2\mathbf{n}_{i2}^T\mathbf{f}_i \rightarrow 0, 
\end{equation}
which infers that the second term on the right hand side of (\ref{eq:expand}) is zero. This concludes the proof.

\end{proof}

Note that in theorem \ref{thm:noise2noise}, condition 1 is the assumption for most learning-based algorithms; condition 4 can be easily satisfied by common networks (including UNet) and realistic noise; condition 2 is the zero-mean property of the noise; condition 3 is the requirement for the independence between the two noise realizations, which is the main focus of this work. 

To analyze how noise correlation and bias will influence Noise2Noise training, we can directly substitute (\ref{eq:conditional_expection}) and (\ref{eq:limit_of_second_term}) to (\ref{eq:expand}) and reach:
\begin{equation}\label{eq:bias_analysis}
\begin{split}
& \frac{1}{N}\sum_i\norm{\mathbf{f}_i - (\mathbf{x}_i + \mathbf{n}_{i2})}_2^2 \\
= &\frac{1}{N}\sum_i\norm{\mathbf{f}_i - \mathbf{x}_i}_2^2 - \frac{1}{N}\sum_i 2\mathbf{f}_i^TE\{\mathbf{n}_{i2} | \mathbf{x}_i, \mathbf{n}_{i1}\} + C \\
= &\frac{1}{N}\sum_i\norm{\mathbf{f}_i - (\mathbf{x}_i + E\{\mathbf{n}_{i2} | \mathbf{x}_i, \mathbf{n}_{i1}\})}_2^2 + C_1,
\end{split}
\end{equation}
where $C$ and $C_1$ are irrelevant to $\mathbf{\Theta}$. 

If $\mathbf{n}_{i2}$ is biased but independent from $\mathbf{n}_{i1}$ (breach of condition 2), then $E\{\mathbf{n}_{i2} | \mathbf{x}_i, \mathbf{n}_{i1}\} = g(\mathbf{x}_i)$ where $g(\mathbf{x}_i)$ is a deterministic function. Then instead of converging to $\mathbf{x}_i$, Noise2Noise will converge to $\mathbf{x}_i + g(\mathbf{x}_i)$. $\mathbf{x}_i + g(\mathbf{x}_i)$ is equivalent to taking multiple measurement and averaging, which is the most straight-forward way of noise reduction and works well under most conditions. Hence, we did not consider the breach of condition 2 as the biggest challenge to the efficacy of Noise2Noise. 

If $\mathbf{n}_{i2}$ is zero-mean but correlated with $\mathbf{n}_{i1}$ (breach of condition 3), we are yet to find a general formula to describe the consequence. In the simplest case, where $\mathbf{n}_{i1}$ and $\mathbf{n}_{i2}$ are white Gaussian noise with correlation $c$, we have $E\{\mathbf{n}_{i2} | \mathbf{x}_i, \mathbf{n}_{i1}\} = c\mathbf{n}_{i1}$. Then the Noise2Noise training will converge to $\mathbf{x}_{i} + c\mathbf{n}_{i1}$, which poses a big challenge because the noise is not completely removed. A more comprehensive analysis is beyond the scope of this work. 
}



\ifCLASSOPTIONcaptionsoff
  \newpage
\fi



\bibliographystyle{IEEEtran}

\bibliography{ms}
\end{document}